\documentclass[runningheads]{llncs}
\usepackage{graphicx}
\usepackage[utf8]{inputenc}
\usepackage{amsmath}
\usepackage{amsfonts}
\usepackage{amssymb}
\usepackage{authblk}
\usepackage{verbatim}
\usepackage{thmtools}
\usepackage{thm-restate}
\usepackage{todonotes}
\usepackage
			{hyperref}

\usepackage[capitalize, noabbrev]{cleveref}
\usepackage[autostyle=true]{csquotes}
\usepackage{graphicx}
\usepackage{color}
\usepackage{nicefrac}
\usepackage{enumerate}
\usepackage[ruled, lined, linesnumbered, commentsnumbered, noend]{algorithm2e}

\usepackage{tikz}
\usetikzlibrary{calc}
\usetikzlibrary{arrows,shapes}
\usetikzlibrary{decorations.markings}
\usetikzlibrary{arrows.meta}



%

\newcommand{\R}{\mathbb{R}}
\newcommand{\rz}{\mathbb{R}}
\newcommand{\N}{\mathbb{N}}

\newcommand{\nz}{\mathbb{N}}
\newcommand{\set}[1]{\{ #1 \}}

\newcommand{\bigO}{\mathcal{O}}

\newcommand{\const}{\textsf{source}}

\newcommand{\Pst}{\mathcal{P}_{st}}
\newcommand{\Nscript}{N}

\newcommand{\boxxx}[1]
 {\fbox{\begin{minipage}{11.80cm}\begin{center}\bigskip\begin{minipage}{11.30cm}
  #1\end{minipage}\end{center}~\end{minipage}}}

\DeclareMathOperator{\red}{\textsf{reduced}}
\DeclareMathOperator{\val}{val}

\begin{document}

\title{A linear time algorithm for linearizing quadratic and higher-order shortest path problems}

\titlerunning{A inear time algorithm for the linearization of the QSPP and SPP$_d$}
%

\author{Eranda \c{C}ela\inst{1}\orcidID{0000-0002-5099-8804} \and
Bettina Klinz\inst{1}\orcidID{0000-0002-6156-688X} \and
Stefan Lendl\inst{2}\orcidID{0000-0002-5660-5397} \and
{\textdagger Gerhard J. Woeginger}\footnote[0]{\textdagger Deceased in April 2022.}\inst{3}\orcidID{0000-0001-8816-2693} \and
Lasse Wulf\inst{1}\orcidID{0000-0001-7139-4092}}
\authorrunning{E. \c{C}ela, B. Klinz, S. Lendl, G. J. Woeginger, L. Wulf}
%
\institute{Institute of Discrete Mathematics, Graz University of Technology, Austria \email{\{cela,klinz,wulf\}@math.tugraz.at} \and
Institut of Operations and Information Systems, University of Graz, Austria
\email{stefan.lendl@uni-graz.at} \and
Department of Computer Science, RWTH Aachen, Germany}

\maketitle              

\begin{abstract}
An instance of the NP-hard Quadratic Shortest Path Problem (QSPP) is called linearizable iff it is equivalent to an instance of the classic
Shortest Path Problem (SPP) on the same input digraph.
The linearization problem for the QSPP (LinQSPP) decides whether a given QSPP
instance is linearizable and determines the corresponding SPP instance in
the positive case. We provide a novel linear time algorithm for the LinQSPP on acyclic digraphs which runs considerably faster than the
previously best algorithm. The algorithm is based on a new insight revealing that the linearizability
of the QSPP for acyclic digraphs can be seen as a local property. Our
approach extends to the more general higher-order shortest path problem. 

\keywords{quadratic shortest path problem \and higher-order shortest path problem \and linearization.}
\end{abstract}
\section{Introduction}\label{intro:sec}

In this paper we consider the linearization problem for 
nonlinear
generalizations of the  \emph{Shortest Path Problem (SPP)}, a classic
combinatorial optimization problem.
An instance of the SPP  consists  of a digraph $G = (V, A)$,  a source vertex $s \in V$, a sink vertex $t \in V$, and a
cost function $c\colon A\to \rz$, which maps each arc $a\in A$ to its cost
$c(a)$.  The cost of a simple directed $s$-$t$-path $P$, 
is given by\footnote{We use the same  notation for the path $P$ and the set of its arcs.}
\begin{equation}\label{SPP:obj}  \text{SPP}(P,c):=\sum_{a\in P} c(a)\, .  \end{equation}
The goal  is to find a simple directed 
$s$-$t$-path in $G$ which minimizes the objective 
(\ref{SPP:obj}).
In general it is assumed that 
there are no circuits of negative weight in $G$.

Consider now a number $d\in \nz$.  The \emph{Order-d Shortest Path Problem
  (SPP$_d$)}
 takes as input a  digraph $G = (V, A)$,  
  a source vertex $s \in V$, a sink vertex $t \in V$, and an order-$d$ arc interaction 
cost  function $q_d\colon \set{B \subseteq A : |B| \leq d}
\to \rz$. Thus $q_d$ assigns a  weight to every subset of
arcs of cardinality at most $d$.
 The cost of a simple directed $s$-$t$-path $P$ 
is given by
\begin{equation}\label{dSPP:obj}
  \text{SPP}_d(P,q_d):=\sum_{S\subseteq P \colon |S|\le d} q_d(S)\, .\end{equation}
The goal is to find  
a simple directed 
$s$-$t$-path in $G$ which minimizes the objective function (\ref{dSPP:obj}).
For $d=2$ we
obtain the   \emph{Quadratic  Shortest Path Problem (QSPP)} which has already been studied in the literature
~\cite{cela2021linearizable,huSo2018,huSo2020,rostami2018}. 
For notational convenience we write $\text{QSPP}(P,q)$ for $\text{SPP}_d(P,q_d)$ if $d=2$.
\smallskip

The QSPP arises 
in network optimization  problems where costs are associated with both single arcs and pairs of arcs.
This includes 
variants of stochastic and time-dependent route planing  problems  
\cite{nie2009reliable,sen2001mean,sivakumar1994variance}
and network design problems 
\cite{murakami1997restoration,gamvros2006satellite}. 
For an overview on applications of the QSPP see \cite{huSo2020,rostami2018}.
We are not  aware of any
publications  
for the case $d>2$.

While the SPP can be solved in polynomial time, the QSPP is an NP-hard
problem even for the special case of the adjacent QSPP where  the  costs of all pairs of non-consecutive  arcs are  
zero ~\cite{rostami2018}.
The QSPP is  an  extremely difficult
problem also from the practical point of view.  
 Hu and Sotirov~\cite{huSo2020} report that a state-of-the-art quadratic solver can
solve QSPP instances with up to $365$ arcs, while their  tailor-made B\&B
algorithm can solve instances with up to $1300$ arcs to optimality within one  hour. 
Instances of the SPP can however be solved in a fraction of a second for graphs with
millions of vertices and arcs.
\smallskip

Given the 
hardness of the QSPP, a research line on this problem has focussed
on 
polynomially solvable special cases which
arise if the input graph and/or the cost coefficients have certain specific
properties. Rostami et al.~\cite{rostami2015} have presented a polynomial time
algorithm for the adjacent QSPP in acyclic digraphs and in series-parallel graphs. Hu 
and Sotirov~\cite{huSo2018}
have shown that the QSPP can be solved in polynomial time if the  quadratic
costs build a  nonnegative symmetric product matrix, or if the quadratic costs
build a sum matrix and all $s$-$t$-paths in 
$G$ have the same number of arcs. 
\smallskip

These two polynomially solvable
special cases of the QSPP belong to the larger class of the \emph{linearizable $\text{SPP}_d$ instances} defined as follows.
\begin{definition}
 An instance of the SPP$_d$ with an input digraph $G=(V,A)$, a source node $s$, a sink node $t$
 and a  cost function $q_d$ is called linearizable if there exists a cost function
 $c\colon A\to \rz_+$ such that for any simple directed $s$-$t$-path  $P$ in $G$ the equality
 $\text{SPP}(P,c) = \text{SPP}_d(P,q_d)$ holds.
  A linearizable instance  of the  QSPP is defined analogously, just  replacing $\text{SPP}_d(P,q_d)$ by $\text{QSPP}(P,q)$.
\end{definition}
The recognition of linearizable QSPP (SPP$_d$) instances, also
called \emph{the linearization problem for the QSPP (SPP$_d$)}, abbreviated by \textsc{Lin}QSPP (\textsc{Lin}SPP$_d$) arises 
as a natural question. In this problem the task 
consists of deciding whether a
given  instance of the  QSPP (SPP$_d$)  is linearizable and in finding the linear cost function
$c$ in the positive case.
The 
notion of 
linearizable special cases of hard
combinatorial optimization problems  goes back to
 Bookhold~\cite{bookhold1990contribution} who introduced 
 it for  the quadratic assignment problem (QAP). 
 For symmetric linearizable QAP instances a full characterization has been obtained while
 only partial results are available for the linearizability of the general QAP, see
 \cite{CeDeWo2016,Erdogan2006,ErTa2007,ErTa2011,kabadi2011n,punnen2013linear}.
The linearization problem has been studied for several other quadratic combinatorial optimization problems, see  \cite{CuPu2018,sotirov2021quadratic}
for the quadratic minimum spanning tree problem,  \cite{PuWaWo2017} for the quadratic TSP, \cite{deMeSo2020} for the quadratic cycle cover problem and  \cite{huSo2021} for general binary quadratic programs.  
Linearizable instances of  a quadratic problem can be used  to generate lower bounds  needed   in B\&B algorithms. For example, Hu and Sotirov introduce the family of the so-called \emph{linearization-based bounds} \cite{huSo2021} for the binary quadratic problem. Each specific bound of this family is based on   a set of linearizable instances of the problem. The authors show that 
well-known bounds from the literature are special cases of the newly introduced bounds.
Clearly, fast algorithm for the  linearization problem are important in 
this context.
\smallskip

While \textsc{Lin}SPP$_d$ has not been investigated in the literature so far (to the best of our knowledge), the  \textsc{Lin}QSPP has been subject of investigation in some recent papers.
In \cite{cela2021linearizable} \c{C}ela, Klinz, Lendl, Orlin, Woeginger and
Wulf proved  that it  is  coNP-complete to decide whether a   QSPP instance on an   input graph
containing   a directed cycle is linearizable.    Thus, a nice characterization of linearizable QSPP
  instances for such  graphs
  seems to be
  unlikely.
   In the acyclic case, Hu and Sotirov first described a polynomial-time
   algorithm for the \textsc{Lin}QSPP  on  directed
   two-dimensional grid graphs \cite{huSo2018}. Recently, in  \cite{huSo2021} they  generalized this
   result to all acyclic digraphs and proposed  an algorithm which solves the
   problem in $\bigO(nm^3)$, where $n$ and $m$ denote the number of vertices and
   arcs in 
   $G$.
\smallskip
   
   Finally, let us mention a related  concept, the  so-called universal linearizability,
   studied in \cite{cela2021linearizable,huSo2018}.
A digraph $G$ is called \emph{universally linearizable with respect to the
  QSPP} iff every instance of the QSPP on the input graph $G$ is linearizable for
every choice of the cost function $q$. In \cite{huSo2018} it is shown that a
particular class of grid graphs is universally linearizable. In
\cite{cela2021linearizable} a characterization of universally linearizable grid
graphs in terms of structural properties of the set of $s$-$t$-paths is
given. Moreoever, for acyclic digraphs a forbidden subgraphs characterization of
the universal linearizability is given in \cite{cela2021linearizable}.  
\medskip

\textbf{Contribution and organization of the paper.}
In this paper we provide a novel and simple characterization of linearizable QSPP instances on acyclic 
digraphs.
Our characterization shows that the linearizability can be seen as a  \emph{local} property.  In particular, we show  that an instance of the QSPP on an acyclic 
digraph $G$  is linearizable if and only if each subinstance obtained by considering a subdigraph of $G$ consisting of two $s$-$t$-paths in $G$   is linearizable. Our simple characterization also works for the SPP$_d$  and even for completely arbitrary cost  functions, which  assign some cost $f(P)$ to every $s$-$t$-path $P$ without any further restrictions. The latter problem is referred to as the \emph{Generic Shortest Path Problem} (GSPP) and is formally introduced in Section~\ref{defi:sec}. Indeed, the characterization of the linearizable instances of the SPP$_d$ 
follows from the characterization of the linearizable instances of the GSPP, both on acyclic digraphs.

Further,  we propose  a 
linear time algorithm
which can check the local condition mentioned above   for the QSPP and the SPP$_d$. 
We note that this is not straightforward, because the number of the subinstances for which the condition needs to be checked is in general exponential. As a side result our approach reveals  an interesting connection between the \textsc{Lin}QSPP and the problem of deciding  whether all $s$-$t$-paths in a digraph have the same length.
As a result, we obtain an algorithm which solves the \textsc{Lin}QSPP linearization  in $\bigO(m^2)$ time, thus improving the best previously known running time of $\bigO(nm^3)$ obtained in \cite{huSo2021}. Our approach yields an $\bigO(d^2 m^d)$ time algorithm for the \textsc{Lin}SPP$_d$, thus providing  the first (polynomial time) algorithm for this problem.  Note that the running time of the proposed algorithms is linear in the input size for both problems,  \textsc{Lin}QSPP and \textsc{Lin}SPP$_d$, respectively. 

Finally, we also obtain a polynomial time algorithm that given an acyclic digraph $G$ computes a basis of the subspace of all linearizable degree-$d$ cost functions on $G$. Such a basis can be used to obtain better linearization-based bounds usable in B\&B algorithms.
\smallskip

The paper is organized as follows. 
After introducing some notations and preliminaries in Section~\ref{defi:sec} we present the result on the characterization of the linearizable QSPP and SPP$_d$ instances on acyclic input digraphs in Section~\ref{charact:sec}. The algorithms for the linearization problems \textsc{Lin}QSPP and \textsc{Lin}SPP$_d$ are presented in Section~\ref{algo:sec}. 
\cref{sec:subspace} deals with 
computing a basis of the subspace of all linearizable $d$-degree cost functions on an acyclic digraph $G$.

\section{Notations  and preliminaries}\label{defi:sec}
 Given a digraph $G=(V,A)$, a simple directed $s$-$t$-path $P$ in $G$ is specified as a sequence of
arcs $P=(a_1,a_2,\ldots,a_p)$ such that  $a_1$ starts at $s$, $a_p$
ends at $t$, nonconsecutive arcs do not share a vertex and  the end vertex of $a_{i}$ coincides with  the start vertex of $a_{i+1}$ for any $i\in \{1,\ldots,p-1\}$. The number
$p$ of arcs in $P$ is called the length of the path. We sometimes use the same notation for a path $P$ and the  set of its arcs. 
We  consider a single arc $(x, y)$ as an $x$-$y$-path of length $1$  and a single vertex $x$ as a  trivial $x$-$x$-path of length $0$. Given an  $x$-$y$-path $P_1$ and a  $y$-$z$-path $P_2$, we denote the \emph{concatenation} of $P_1$ and $P_2$ by $P_1 \cdot P_2$. We also consider concatenations of paths and arcs, that is, terms of the form $P \cdot a$ for some $x$-$y$-path $P$ and some arc $a = (y, z)$.
 
In the linearization problem, we are concerned with acyclic digraphs $G =(V, A)$ with a source vertex $s$ and a sink vertex $t$. 
We denote by  $\Pst$ the set of all simple directed $s$-$t$-paths.
We often assume that $G$ is \emph{$\Pst$-covered}, that is, every arc in $G$ is traversed by at least one path in $\Pst$. 
It is easy to see that this assumption can be made without loss of generality. 

Let $d \geq 2$ be a natural number. The \emph{Order-$d$ interaction costs} are given by a mapping $q_d\colon \set{B \subseteq A : |B| \leq d}\to \rz$, assigning a (potentially negative) interaction cost to every subset of at most $d$ arcs.
The cost $\text{SPP}_d(P,q_d)$ of some path $P$ under interaction costs $q_d$ is defined as in equation (\ref{dSPP:obj}).
If $d$ is unambiguously clear form the context, we use the  more compact notation $f_q(P): = \text{SPP}_d(P,q_d)$.
 In this paper we explicitly allow the case $q(\emptyset) \neq 0$, because this simplifies the calculations.  
The \emph{linearization problem for the Order-$d$ Shortest Path Problem} (\textsc{Lin}SPP$_d$)
is formally defined as follows.
\begin{center}
\boxxx{\textbf{Problem:} The  linearization problem for the SPP$_d$ (\textsc{Lin}SPP$_d$)
\\[1.0ex]
\textbf{Instance:} A $\Pst$-covered directed graph $G=(V,A)$ with $s,t\in V$, $s\neq t$; an integer $d \geq 2$;
an order-$d$ arc interaction cost function $q_d: \set{B \subseteq A \colon |B| \leq d} \to \R$.
\\[1.0ex]
\textbf{Question:} Find a \emph{linearizing cost function}  $c\colon A\to\R$  such that $\text{SPP}_d(P,q_d) = \text{SPP}(P,c)$ for all $P \in \Pst$ or decide that such a linearizing  cost function does not exist.}
\end{center}
In the special case $d = 2$, we obtain the    linearization problem for the QSPP (\textsc{Lin}QSPP).
\smallskip

Finally, let us consider the \emph{Generic Shortest Path Problem} (GSPP) which takes as input  a digraph $G=(V,A)$ with a source vertex $s$, a sink vertex $t$, $s\neq t$, and a generic cost function $f\colon \Pst \to \rz$  assigning  a cost $f(P)$ to every path $P\in \Pst$\footnote{We assume that $f$ is specified by an oracle.} We assume w.l.o.g.\ that $G$ is $\Pst$-covered. The goal is to find an $s$-$t$-path  which minimizes the objective function $f(P)$ over $\Pst$. 
A linearizable instance of  the GSPP and the linearization problem for the GSPP (\textsc{Lin}GSPP) are  defined analogously as in the respective definitions for SPP$_d$.
\section{A characterization of linearizable instances of  the GSPP on acyclic digraphs}\label{charact:sec}
 The main result of this section is \cref{thm_linearization_characterization}, our novel characterization of  linearizable instances of the GSPP on acyclic digraphs defined as in Section~\ref{defi:sec}. 
\begin{definition}
Let $G=(V,A)$ be a $\Pst$-covered acyclic digraph. For some vertex $v$, let $P_1, P_2$ be two $s$-$v$-paths, and let $Q_1, Q_2$ be two $v$-$t$-paths. The 5-tuple $(v,P_1,P_2,Q_1,Q_2)$ is called a \emph{two-path system} contained in $G$. The system is called \emph{linearizable} with respect to the function $f : \Pst \rightarrow \R$, if there exists a cost function $c : A \rightarrow \R$ such that for all four paths $P \in \set{P_1 \cdot Q_1, P_1 \cdot Q_2, P_2 \cdot Q_1, P_2 \cdot Q_2}$ we have $f(P) = SPP(P,c)$. Such a $c$ is called  a {\emph linearizing cost function} for  $(v,P_1,P_2,Q_1,Q_2)$ with respect to $f$. 
\end{definition}

\tikzset{->-/.style={
	decoration={
 		markings,
  		mark=at position #1 with {\arrow[scale=2,>=stealth]{>}}
  		},
  	postaction={decorate}
  },
  ->-/.default=0.5
}
\tikzstyle{vertex}=[draw,circle,fill=black, minimum size=4pt,inner sep=0pt]
\tikzstyle{edge} = [draw,thick,-]
\tikzstyle{weight} = [font=\small]
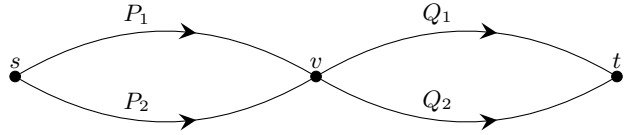
\begin{figure}[bth]
\centering
\begin{tikzpicture}[scale=1.0, auto,swap]

    \node[vertex] (s) at (0,0) {};
    \node[vertex] (v) at (4,0) {};
    \node[vertex] (t) at (8,0) {};

    \node[above] at (s) {$s$};
    \node[above] at (t) {$t$};
    \node[above] at (v) {$v$};
    
    \draw[->-=0.6,pos=0.4] (s) to[bend left] node[above]{$P_1$} (v);
    \draw[->-=0.6,pos=0.4] (s) to[bend right] node[above]{$P_2$} (v);
    \draw[->-=0.6,pos=0.4] (v) to[bend left] node[above]{$Q_1$} (t);
    \draw[->-=0.6,pos=0.4] (v) to[bend right] node[above]{$Q_2$} (t);
\end{tikzpicture}
\caption{A two-path system.}
 \label{fig:two-path-system}
\end{figure}

See \cref{fig:two-path-system} for an illustration of a two-path system. 
Note that $P_1$ and $P_2$ (as well as $Q_1$ and $Q_2$) can have common inner vertices and that the cases $P_1=P_2$, $Q_1=Q_2$, $v = s$ and $v = t$ are allowed. 
However, due to the acyclicity of $G$, the paths $P_i$ and $Q_j$ have only the vertex $v$ in common for $i,j \in\{1,2\}$. Further, observe that the linearizability of a two-path system 
is a local property, in the sense  that it only   depends on the four paths $P_1 \cdot Q_1, P_1 \cdot Q_2, P_2 \cdot Q_1$ and $P_2 \cdot Q_2$. Indeed,  the following simple characterization holds. 

\begin{proposition}
\label{obs:linearizability-two-paths}
A two-path system $(v,P_1,P_2,Q_1,Q_2)$ is linearizable with respect to some function $f\colon \Pst\to \rz$ iff
\begin{equation}
f(P_1 \cdot Q_1) + f(P_2 \cdot Q_2) = f(P_1 \cdot Q_2) + f(P_2 \cdot Q_1).    \label{eq:two-path-lin}
\end{equation}
\end{proposition}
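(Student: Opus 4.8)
The plan is to prove both implications, exploiting that $P_i$ and $Q_j$ meet only in $v$ by acyclicity, so their arc sets are disjoint and the SPP cost splits additively: writing $c(R):=\sum_{a\in R}c(a)$ for an arc set $R$, one has $\mathrm{SPP}(P_i\cdot Q_j,c)=c(P_i)+c(Q_j)$ for every $c\colon A\to\R$. Throughout I abbreviate $f_{ij}:=f(P_i\cdot Q_j)$.

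The forward direction will be immediate: if a linearizing $c$ exists then $f_{ij}=c(P_i)+c(Q_j)$, so both $f_{11}+f_{22}$ and $f_{12}+f_{21}$ equal $c(P_1)+c(P_2)+c(Q_1)+c(Q_2)$, which is precisely \eqref{eq:two-path-lin}.

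For the converse I would first reduce the problem to prescribing the four segment totals $\alpha_i:=c(P_i)$ and $\beta_j:=c(Q_j)$: I want $\alpha_i+\beta_j=f_{ij}$ for all $i,j$. This $2\times 2$ system in four unknowns is solvable iff its only compatibility condition, $f_{11}+f_{22}=f_{12}+f_{21}$, holds, which is exactly \eqref{eq:two-path-lin}; a concrete particular solution is $\alpha_1=0$, $\beta_1=f_{11}$, $\beta_2=f_{12}$, $\alpha_2=f_{21}-f_{11}$.

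The step I expect to be the only real obstacle is realizing these prescribed totals by genuine arc costs, since $P_1,P_2$ (and likewise $Q_1,Q_2$) may share arcs, so their totals cannot be set independently. I plan to handle this with the structural fact that two distinct simple paths with the same endpoints have nonempty arc-differences in both directions (no such path's arc set can properly contain another's). Concretely, if $P_1\neq P_2$ I would choose $a_1\in P_1\setminus P_2$ and $a_2\in P_2\setminus P_1$, assign $c(a_1):=\alpha_1$, $c(a_2):=\alpha_2$, and $0$ on all other arcs of $P_1\cup P_2$, so that the shared arcs contribute nothing and $c(P_1)=\alpha_1$, $c(P_2)=\alpha_2$; the degenerate case $P_1=P_2$ forces $\alpha_1=\alpha_2$ and is handled by loading the total onto a single arc. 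Repeating this for $Q_1,Q_2$ on their (disjoint) arcs and setting $c=0$ elsewhere yields $\mathrm{SPP}(P_i\cdot Q_j,c)=\alpha_i+\beta_j=f_{ij}$ for all four paths, so $c$ linearizes the system. The remaining effort is purely the bookkeeping of this construction and the trivial compatibility algebra.
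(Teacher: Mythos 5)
Your proof is correct and takes essentially the same route as the paper's: the forward direction is the same disjointness/multiset cancellation, and for the converse both you and the paper pick representative arcs in $P_1\setminus P_2$, $P_2\setminus P_1$, $Q_1\setminus Q_2$, $Q_2\setminus Q_1$ (existence justified by the same structural fact, with acyclicity separating the $P$-side from the $Q$-side), set $c=0$ on all other arcs, and solve a linear system whose sole compatibility condition is \eqref{eq:two-path-lin}. Your intermediate step of first prescribing segment totals $\alpha_i,\beta_j$ and then realizing them is only a cosmetic repackaging of the paper's four-equation system in the representative arc costs.
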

\begin{proof}
    First, assume that $(v,P_1,P_2,Q_1,Q_2)$ is linearizable and let $c$ be the corresponding  linearizing cost function.  Let $M_1$ ($M_2$) be the multiset  resulting from   the union of the sets of the arcs of the paths $P_1\cdot Q_1$ and $P_2\cdot Q_2$  ($P_1\cdot Q_2$ and $P_2 \cdot Q_1$). Since $M_1$ and $M_2$ coincide we get   $c(P_1 \cdot Q_1) + c(P_2 \cdot Q_2) =\sum_{a\in M_1} c(a)= \sum_{a\in M_2} c(a)=c(P_1 \cdot Q_2) + c(P_2 \cdot Q_1)$. Then,   (\ref{eq:two-path-lin}) follows from the definition of the linearizability of   $(v,P_1,P_2,Q_1,Q_2)$.
    
    Assume now  that \cref{eq:two-path-lin} is true. We  show the linearizability of  the two-path system with respect to $f$ by constructing a linearizing cost function $c$.   It is easy to find a suitable  $c$ if $P_1=P_2$ or $Q_1=Q_2$. So let us consider  the more general case  where $P_1\neq P_2$ and $Q_1\neq Q_2$.  In this case, for   each  $P\in \{P_1,P_2,Q_1,Q_2\}$ there exists a so-called \emph{representative arc} $a\in P$ such that $a$  is not contained in any other path  $Q\in \{P_1,P_2,Q_1,Q_2\}$, $Q \neq P$. Let $a_1$, $a_2$, $e_1$, $e_2$ be the representative arcs of $P_1$, $P_2$, $Q_1$ and $Q_2$, respectively.  Consider now a cost function $c : A \rightarrow \R$, such that $c(a) = 0$ if  $a\not\in \{a_1, a_2, e_1, e_2\}$, and $c(a_1)$, $c(a_2)$, $c(e_1)$, 
$c(e_2)$ fulfill the following linear equations:
    \begin{equation*}
        \begin{array}{llcllcl}
        c(a_1) & & + & c(e_1) & & = f(P_1Q_1) \\
        c(a_1) & & + & & c(e_2) &= f(P_1Q_2) \\
        &c(a_2) & + & c(e_1) & &= f(P_2Q_1) \\
        &c(a_2) & + & & c(e_2) &= f(P_2Q_2) 
    \end{array}
    \end{equation*}
    Using basic linear algebra, one can see that this system indeed has a solution whenever  \cref{eq:two-path-lin} holds (there is even a solution with $c(e_2) = 0$). Thus, $c$ constructed as above is a linearizing  cost function for $(v,P_1,P_2,Q_1,Q_2)$ with respect to $f$. 
    \qed
\end{proof}

Now, consider an instance of the GSPP with a $\Pst$-covered acyclic digraph $G$, with a  source vertex $s$, a sink vertex $t$ and a generic cost function $f\colon \Pst\to \rz$. When is this instance $(G,s,t,f)$ linearizable? Obviously, if $G$ contains a two-path system which is not linearizable with respect to $f$, then $(G,s,t,f)$ is not linearizable. The following theorem shows that the linearizability of each two-paths system with respect to $f$ is a sufficient condition for $(G,s,t,f)$ being linearizable. 

\begin{theorem}
\label{thm_linearization_characterization}
Let $G$ be a $\mathcal{P}_{st}$-covered acyclic digraph with a source vertex $s$ and a sink vertex $t$ and  let $f : \Pst \rightarrow \R$ be a generic cost function. Then the instance $(G,s,t,f)$ of the GSPP  is linearizable if and only if every two-path system contained in $G$ is linearizable with respect to $f$.
\end{theorem}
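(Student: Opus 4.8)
The plan is to prove the two directions separately, with essentially all of the work lying in the ``if'' direction. The ``only if'' direction is immediate: if $(G,s,t,f)$ is linearizable via a cost function $c$, then for every two-path system $(v,P_1,P_2,Q_1,Q_2)$ the four completions $P_i\cdot Q_j$ are genuine $s$-$t$-paths, so $f(P_i\cdot Q_j)=\text{SPP}(P_i\cdot Q_j,c)$ holds by assumption; hence the very same $c$ linearizes the system. The substance is to show, conversely, that linearizability of all the (possibly exponentially many) local systems can be glued into a single global linearizing cost function.

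For the ``if'' direction I would construct $c$ by processing the vertices of $G$ in a fixed topological order, restricting attention to the vertices lying on some $s$-$t$-path (harmless since $G$ is $\Pst$-covered). First I would fix, for every such vertex $w$, a canonical $s$-$w$-path $\pi_w$ and a canonical $w$-$t$-path $\sigma_w$, with $\pi_s$ and $\sigma_t$ trivial; by acyclicity $\pi_w\cdot\sigma_w$ is a simple $s$-$t$-path, so $f$ is defined on it. For an arc $a=(u,v)$ I would then set
\begin{equation*}
 c(a):=f(\pi_u\cdot a\cdot\sigma_v)-f(\pi_u\cdot\sigma_u).
\end{equation*}
The aim is to establish, by induction along the topological order, the invariant that there is a single constant $\kappa$ (which will turn out to be $-f(\sigma_s)$) such that $\text{SPP}(P,c)-f(P\cdot\sigma_v)=\kappa$ for every vertex $v$ and every $s$-$v$-path $P$. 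The base case $v=s$ holds since the only $s$-$s$-path is trivial.

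The heart of the argument — and the step I expect to be the main obstacle — is the inductive step, where a path $P=P'\cdot a$ is obtained from an $s$-$u$-path $P'$ by appending $a=(u,v)$. Writing $\text{SPP}(P,c)=\text{SPP}(P',c)+c(a)$ and inserting the inductive hypothesis at $u$, the preservation of the invariant reduces precisely to the claim that $f(P'\cdot a\cdot\sigma_v)-f(P'\cdot\sigma_u)$ is \emph{independent of the chosen $s$-$u$-path $P'$}. Comparing two such paths $P'$ and $P''$, this independence is exactly \cref{eq:two-path-lin} for the two-path system $(u,\,P',\,P'',\,a\cdot\sigma_v,\,\sigma_u)$ — a legitimate system, since $a\cdot\sigma_v$ and $\sigma_u$ are $u$-$t$-paths and, by acyclicity, all four concatenations are simple $s$-$t$-paths on which $f$ is defined. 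Thus the hypothesis that every two-path system is linearizable, in the form given by \cref{obs:linearizability-two-paths}, is precisely what makes the assignment consistent; this is the point at which the locality assumption is upgraded to a global statement. A short computation using this identity then shows $\text{SPP}(P,c)-f(P\cdot\sigma_v)$ equals the corresponding constant at $u$, and since every relevant predecessor $u$ carries the same value by induction, the invariant propagates to $v$.

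Finally, evaluating the invariant at $v=t$ (where $\sigma_t$ is trivial) yields $\text{SPP}(P,c)-f(P)=\kappa$ for a common constant $\kappa$ and all $P\in\Pst$. To upgrade this to an exact linearization I would absorb the constant by decreasing $c(a)$ by $\kappa$ on every arc $a$ leaving $s$: since each $s$-$t$-path uses exactly one such arc, every path cost drops by exactly $\kappa$, giving $\text{SPP}(P,c)=f(P)$ for all $P\in\Pst$ and thereby exhibiting the desired global linearizing cost function, which completes the proof.
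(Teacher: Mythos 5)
Your proof is correct and follows essentially the same route as the paper's: an inductive construction of $c$ along a topological order, whose consistency is secured by applying \cref{obs:linearizability-two-paths} to exactly the same family of two-path systems (two $s$-$u$-prefixes compared against the fixed tails $\sigma_u$ and $a\cdot\sigma_v$; in the paper these tails are the nonbasic paths $N_u$ and $a\cdot N_v$). The only bookkeeping difference is that the paper anchors its construction to a system of nonbasic arcs and defines $c$ recursively in reduced form, so that $\mathrm{SPP}(P,c)=f(P)$ holds exactly, whereas your explicit definition via canonical paths yields the linearizer only up to the additive constant $\kappa$, which you then correctly absorb into the arcs leaving $s$.
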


 Before proving the theorem, we need some preparation. Let $G = (V,A)$ be a $\Pst$-covered acyclic digraph with source vertex $s$ and sink vertex $t$. First we introduce a \emph{topological arc order} as a total 
  order $\preceq$ on $A$ such that for any pair of  arcs $a$, $a'$ in $A$ the following holds:   if there  exists a path $P$  containing both $a$ and $a'$ such that $a$ comes before $a'$ in $P$, then $a\preceq a'$.  It is easy to see  that any acyclic digraph has a (in general non-unique)  topological arc order. Moreover, a topological arc order  can be obtained from a topological vertex order. 
 
 Further, we recall the definition of a \emph{system of nonbasic arcs}  introduced by  Sotirov and Hu~\cite{huSo2021}.  
 \begin{definition}\label{nonbasic:def}
 Let $G$ be a $\mathcal{P}_{st}$-covered acyclic digraph with a source vertex $s$ and a sink vertex $t$. A set $\Nscript \subseteq A$ is called a \emph{system of nonbasic arcs}, iff for every vertex $v \in V \setminus \set{s,t}$ exactly one of the arcs starting at $v$ is contained in $\Nscript$. The latter  arc is called the \emph{nonbasic arc of $v$}. An arc $a \in A \setminus N$ is called \emph{basic}.
 \end{definition}
 Obviously, the system of nonbasic arcs is not unique.  Any such system  forms an in-tree rooted at $t$ containing   all the vertices in $V$ except for $s$. For some system of nonbasic arcs $\Nscript$ and some vertex $v \in V \setminus \set{s}$, we let $N_v$ denote the unique $v$-$t$-path consisting  of nonbasic arcs (where $N_t$ is the trivial path).  A  cost function $c\colon A \rightarrow \R$ is called \emph{in reduced form} with respect to $\Nscript$, if $c(a) = 0$ for all nonbasic arcs $a \in \Nscript$. The following lemma is an easy adaption from \cite{huSo2021}, where an analogous statement was proven for the less general case of the QSPP instead of the GSPP (details are provided in the full version of this paper).

\begin{lemma}[adapted from {\cite[Prop. 4]{huSo2021}}]
\label{lemma:nonbasic-arcs}
Let $G$ be a $\Pst$-covered acyclic digraph with a source vertex $s$ and a sink vertex $t$. Let  $f\colon  \Pst \rightarrow \R$ be a generic cost function and  let $\Nscript \subseteq A$ be a fixed system of nonbasic arcs. If  $(G,s,t,f)$ is a linearizable instance of the GSPP, then there exists one and only one linear cost function $c\colon A \rightarrow \R$ which is both a linearizing cost function  and in reduced form.
\end{lemma}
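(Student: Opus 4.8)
The plan is to establish the two claimed properties—existence and uniqueness—separately, using the fixed system of nonbasic arcs $\Nscript$ to pin down a canonical representative among all linearizing cost functions.

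For \textbf{existence}, I would start from an arbitrary linearizing cost function $c'$, which exists by the assumption that $(G,s,t,f)$ is linearizable. The idea is to transform $c'$ into reduced form without changing the objective value $\text{SPP}(P,c')$ on any path $P \in \Pst$. Recall that $\Nscript$ forms an in-tree rooted at $t$, so every vertex $v \neq s$ has a unique nonbasic $v$-$t$-path $N_v$. The key observation is that any $s$-$t$-path $P$ passing through a vertex $v$ can be decomposed, and whenever $P$ traverses the nonbasic arc $a = (v,w)$ of $v$, the cost $c'(a)$ can be ``pushed'' into the other arcs. More concretely, I would define a potential-style transformation: assign to each vertex $v$ a value $\pi(v) := \text{SPP}(N_v, c')$ (the nonbasic-path cost from $v$ to $t$, with $\pi(t)=0$), and set the new cost $c(a) := c'(a) + \pi(w) - \pi(v)$ for each arc $a=(v,w)$. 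Since every $s$-$t$-path telescopes, $\text{SPP}(P,c) = \text{SPP}(P,c') + \pi(t) - \pi(s) = \text{SPP}(P,c') - \pi(s)$, which is a constant shift; to preserve the value exactly, one absorbs the shift using that $q(\emptyset)\neq 0$ is allowed, or more cleanly one checks the shift vanishes because $s$ itself can be handled by the path structure. The crucial point is that for a nonbasic arc $a=(v,w)$, the telescoping along $N_v = a \cdot N_w$ forces $c(a) = \pi(v) - \pi(w) + c'(a) = 0$ by the very definition of $\pi$. Hence $c$ is in reduced form and still linearizing.

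For \textbf{uniqueness}, suppose $c_1$ and $c_2$ are both linearizing and both in reduced form, and set $\delta := c_1 - c_2$. Then $\delta$ is in reduced form (so $\delta(a)=0$ for all $a \in \Nscript$) and satisfies $\text{SPP}(P,\delta) = 0$ for every $P \in \Pst$, since both agree with $f$ on all paths. I would prove $\delta \equiv 0$ by processing the basic arcs in the topological arc order $\preceq$, or equivalently by induction on the in-tree structure. Take any basic arc $a=(v,w)$; because $G$ is $\Pst$-covered, $a$ lies on some path $P \in \Pst$, which we may write as $R \cdot a \cdot N_w$ where $R$ is an $s$-$v$-path and $N_w$ is the nonbasic $w$-$t$-path. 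The companion path $R \cdot N_v$ (replacing $a \cdot N_w$ by the nonbasic continuation at $v$) is also in $\Pst$. Subtracting the two zero-sum equations $\text{SPP}(R\cdot a\cdot N_w,\delta)=0$ and $\text{SPP}(R\cdot N_v,\delta)=0$, the shared prefix $R$ cancels and the nonbasic arcs contribute nothing, leaving $\delta(a) = 0$. Running this over all basic arcs gives $\delta \equiv 0$, hence $c_1 = c_2$.

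The main obstacle I anticipate is making the existence construction airtight—specifically, verifying that the telescoping potential shift does not silently alter $f$-values and that one genuinely lands in reduced form for \emph{every} nonbasic arc simultaneously, not just along a single path. The cleanest route is to define $\pi$ via the nonbasic in-tree (so $\pi(v)=\text{SPP}(N_v,c')$ is unambiguous) and then argue that the reduction $c(a)=c'(a)+\pi(w)-\pi(v)$ is well-defined globally; the subtlety is confirming that the constant $\pi(s)$ can be accommodated, which is exactly where allowing $q(\emptyset)\neq 0$ (equivalently, permitting a constant offset in the linearization) pays off. For uniqueness the decomposition $P = R \cdot a \cdot N_w$ must be valid, which relies on $\Pst$-coverage to guarantee $a$ lies on some path and on acyclicity to guarantee the concatenation $R \cdot a \cdot N_w$ is genuinely a simple $s$-$t$-path; I would flag these structural checks as the points needing care.
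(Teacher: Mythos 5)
Your strategy is essentially the one the paper has in mind: the paper defers the proof to its full version, describing it as an easy adaptation of Hu and Sotirov~\cite{huSo2021}, and the intended argument is exactly your two-step scheme --- existence by shifting an arbitrary linearizing cost function into reduced form along the nonbasic in-tree (this is also why the paper can later claim that $\red(c)$ is computable in $\bigO(n+m)$ time), and uniqueness by comparing paths of the form $R \cdot a \cdot N_w$ and $R \cdot N_v$, using $\Pst$-coveredness and acyclicity (which you correctly flag as the reason these concatenations are simple paths). The core computation in your existence step is also right: with $c(a) := c'(a) + \pi(w) - \pi(v)$ every nonbasic arc $a = (v,w)$ gets $c(a) = 0$, since $\pi(v) = c'(a) + \pi(w)$; note, though, that your later line ``$c(a) = \pi(v) - \pi(w) + c'(a) = 0$'' has the signs flipped relative to your own definition.

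Two boundary issues around the vertex $s$ need repair, and one of your proposed fixes is invalid. First, $s$ has no nonbasic arc (Definition~\ref{nonbasic:def} assigns nonbasic arcs only to $v \in V \setminus \set{s,t}$), so $N_s$ and hence $\pi(s)$ are undefined, and your telescoped identity $\mathrm{SPP}(P,c) = \mathrm{SPP}(P,c') - \pi(s)$ does not parse as written. The clean resolution is to set $\pi(s) := 0$, i.e.\ $c(a) := c'(a) + \pi(w)$ for arcs $a = (s,w)$; then telescoping gives $\mathrm{SPP}(P,c) = \mathrm{SPP}(P,c')$ exactly, there is no shift to absorb, and reduced form is unaffected because source arcs are never nonbasic. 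By contrast, ``absorbing the shift using $q(\emptyset) \neq 0$'' is not available: this lemma concerns the GSPP, whose linearizing function must satisfy $f(P) = \sum_{a \in P} c(a)$ with no constant term --- the remark about $q(\emptyset)$ in the paper applies to the order-$d$ cost function on the nonlinear side, not to $c$. Second, your uniqueness argument invokes the companion path $R \cdot N_v$, which again does not exist when the basic arc $a = (v,w)$ has $v = s$; that case needs the separate (simpler) argument $0 = \mathrm{SPP}(a \cdot N_w, \delta) = \delta(a) + \sum_{a' \in N_w} \delta(a') = \delta(a)$, using that $\delta$ vanishes on the nonbasic arcs of $N_w$. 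With these two repairs your proof is complete and coincides with the intended one.
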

    

Let $(G,s,t,f)$ be a linearizable  instance of the GSPP with $G=(V,A)$ and $\Nscript\subseteq A$ be a fixed system of nonbasic arcs. For a  linearizing cost function $c \colon A \rightarrow \R$, we denote by $\red(c)$ the unique linearizing cost function  in reduced form (which exists due to \cref{lemma:nonbasic-arcs}). 
It follows from the arguments in the proof of  \cref{lemma:nonbasic-arcs} that for given $c$ one can compute $\red(c)$ in $\bigO(n +m)$ time. We are now ready to sketch the proof of our main theorem.
\begin{proof}[Sketch of the proof of \cref{thm_linearization_characterization}]

The necessity of the conditions for linearizability is trivial.
Now we prove the sufficiency.
Thus we  assume that every two-path system is linearizable with respect to $f$ and show that $(G,s,t,f)$ is linearizable. Let $\Nscript$ be a system of nonbasic arcs. The main idea is to find a linearizing cost function  which is in reduced form, i.e., which has value 0 on all nonbasic arcs. 
    To this end we consider a topological arc order $\preceq$ on the set  $A$ of arcs in $G$ and inductively define a linearizing cost function $c \colon A \rightarrow \R$ as follows. 
    For any arc $a = (u,v)$  set
        \begin{equation}\label{lincost:equ} c(a) := \begin{cases}
        f(P \cdot a \cdot N_v) - \sum_{a' \in P}c(a')
        & a \not\in \Nscript \\
        0%
        ; & a \in \Nscript
        \end{cases}
        \end{equation}
   for some $s$-$u$-path $P$. 
   \smallskip
   
   Consider now  the following claim the proof of which is omitted for brevity. 
   \textbf{Claim:} If all two-path systems in $G$ are linearizable with respect to $f$, then function $c$ in Equation~(\ref{lincost:equ}) is well-defined and independent of the concrete choice of $P$. Moreoever, the following equation holds for all $s$-$u$-paths $P$:
    \begin{equation}\label{claim2:equ} f(P \cdot a \cdot N_v)=c(a)+\sum_{a' \in P} c(a') = c(P \cdot a \cdot N_v)\end{equation}
    
     Observe that  the claim immediately implies  that $(G,s,t, f)$ is linearizable. Indeed, let $c$ be the  cost function defined in \cref{lincost:equ} and  let $Q$ be some $s$-$t$-path. Choose $a = (x,t)$ to be the  last arc on $Q$. Then $N_t$ is the trivial path from $t$ to $t$, so by  applying \cref{claim2:equ} to the arc $a$, we have $f(Q) = c(Q)$. 

    \qed
\end{proof}

Since in general a graph contains exponentially many different two-path systems,   \cref{thm_linearization_characterization} does not seem to lead to  an  efficient algorithm for  the linearization problem \textsc{Lin}GSPP at a first glance. However, we show in the next section that this is indeed the case. The arguments  are based on a more technical version of \cref{thm_linearization_characterization} and involve the concept of  so-called \emph{strongly basic arcs} and their property $(\pi)$ defined below.

\begin{definition}\label{stronglybasic:def}
    Let $G = (V, A)$ be an acyclic $\Pst$-covered digraph with source vertex $s$ and sink vertex $t$.  Let $f \colon  \Pst \rightarrow \R$ be a generic cost function and let $N \subseteq A$ be a system of nonbasic arcs in $G$. A basic arc $(u, v)$ is called \emph{strongly basic}, if it is not incident to the source vertex, that is if $u \neq s$.
    
  \noindent  A strongly basic arc $a = (u,v)$ has the \emph{property $(\pi)$}, if for 
    any $s$-$u$-paths $P$ the value $ \val(a, P) :=  f(P \cdot a \cdot N_v) - f(P \cdot N_u)$ does not depend  on the choice of $P$. 
    \end{definition}
    
    Thus, if a strongly basic arc $a = (u,v)$ has the property $(\pi)$,   we have  $\val(a, P) = \val(a, Q)$ for any two  $s$-$u$-paths $P, Q$ and this implies the existence of a value   $\val(a) := \val(a, P)$ for each  $s$-$u$-path $P$ and $val(a)$ is well defined for each strongly basic arc. Finally, we set   $\val(a) := f(a \cdot N_v)$ for each  basic arc $a = (s, v)$.

\begin{lemma}
\label{lemma:property-pi}
    Let $G = (V, A)$ be an acyclic $\Pst$-covered digraph with source vertex $s$ and sink vertex $t$. Let $f \colon \Pst \rightarrow \R$ be a generic cost function and let $N \subseteq A$ be a system of nonbasic arcs in $G$. Then $(G, s,t,f)$ is linearizable if and only if every strongly basic arc has the property $(\pi)$. In this case, the mapping   $c \colon  A \rightarrow \R$ given by
    \[
    c(a) = \begin{cases}
    \val(a); & a \text{ is basic }\\
    0; & a \text{ is nonbasic}
    \end{cases}
    \]
    is a  linearizing cost function  in reduced form
\end{lemma}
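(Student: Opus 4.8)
The plan is to tie the quantity $\val(a,P)$ to the value that a linearizing cost function assigns to $a$, so that property $(\pi)$ becomes exactly the statement that this value is path-independent. Throughout I write $c(R):=\sum_{a'\in R}c(a')$ for a path $R$ and note that for a reduced-form cost function $c(N_u)=c(N_v)=0$ for every vertex, since $N_u,N_v$ consist of nonbasic arcs. First I would record an acyclicity remark: for an arc $a=(u,v)$ and any $s$-$u$-path $P$, the concatenation $P\cdot a\cdot N_v$ is a simple $s$-$t$-path (a repeated vertex would yield a directed cycle), so $f$ is defined on it; likewise no arc enters $s$ or leaves $t$ in a $\Pst$-covered digraph, which guarantees $N_v$ exists and that the last arc of an $s$-$t$-path ends at $t$.

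For the ``only if'' direction, I would assume $(G,s,t,f)$ is linearizable and take, via \cref{lemma:nonbasic-arcs}, the unique linearizing $c$ in reduced form. For a strongly basic arc $a=(u,v)$ and an arbitrary $s$-$u$-path $P$, linearizability on the two paths $P\cdot a\cdot N_v$ and $P\cdot N_u$ gives
\[
\val(a,P)=f(P\cdot a\cdot N_v)-f(P\cdot N_u)=c(P\cdot a\cdot N_v)-c(P\cdot N_u)=c(a),
\]
which is independent of $P$. Hence $a$ has property $(\pi)$ and $\val(a)=c(a)$; the same computation gives $\val(a)=f(a\cdot N_v)=c(a)$ for basic arcs incident to $s$. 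This already identifies the reduced-form linearizing function with the map $c$ in the statement.

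For the ``if'' direction, I would assume every strongly basic arc has property $(\pi)$, so that $\val(a)$ is well defined for all basic arcs, define $c$ as in the lemma (in reduced form by construction), and prove it linearizes $f$ by strong induction along a topological arc order $\preceq$ on the claim
\[
f(P\cdot a\cdot N_v)=c(P\cdot a\cdot N_v)\quad\text{for every arc }a=(u,v)\text{ and every }s\text{-}u\text{-path }P,
\]
which I call $(\ast)$. Since every $s$-$t$-path $Q$ ends in some arc $a=(x,t)$ with $N_t$ trivial, applying $(\ast)$ to this last arc yields $f(Q)=c(Q)$, proving linearizability and the ``in this case'' assertion. In the base case $u=s$ the path $P$ is trivial and $a$ is automatically basic (no nonbasic arc starts at $s$), so $(\ast)$ reduces to $\val(a)=f(a\cdot N_v)$ together with $c(N_v)=0$. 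In the step with $u\neq s$, write $P=P'\cdot b$ with last arc $b=(w,u)$; as $b$ precedes $a$ on $P\cdot a\cdot N_v$ the order gives $b\prec a$, so the induction hypothesis for $b$ yields $f(P\cdot N_u)=c(P\cdot N_u)=c(P)$. If $a$ is basic, property $(\pi)$ gives $f(P\cdot a\cdot N_v)=\val(a)+f(P\cdot N_u)=c(a)+c(P)=c(P\cdot a\cdot N_v)$; if $a$ is nonbasic, then $a$ is the first arc of $N_u$, so $a\cdot N_v=N_u$, hence $P\cdot a\cdot N_v=P\cdot N_u$ and $(\ast)$ is immediate from the hypothesis for $b$.

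The main obstacle I expect is organizational rather than conceptual: ensuring all concatenations are genuinely simple paths (which I isolate as the acyclicity remark), justifying $b\prec a$ cleanly from the definition of the topological arc order, and arranging the base and nonbasic cases so that the single identity $(\ast)$ covers every arc uniformly, including the final arc into $t$. The reduced-form property and the uniqueness implicit in the ``in this case'' clause then follow immediately from $c\equiv 0$ on nonbasic arcs together with \cref{lemma:nonbasic-arcs}.
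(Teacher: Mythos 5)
Your proof is correct, and the engine inside it --- the topological-arc-order induction establishing $f(P\cdot a\cdot N_v)=c(P\cdot a\cdot N_v)$, with the case split $u=s$, versus $u\neq s$ with $a$ basic, versus $u\neq s$ with $a$ nonbasic (where $a\cdot N_v=N_u$), concluded by applying the claim to the last arc of an $s$-$t$-path using that $N_t$ is trivial --- is exactly the induction the paper relies on. The difference is organizational: the paper does not re-run this induction inside the lemma's proof, but instead (a) notes via \cref{obs:linearizability-two-paths} that property $(\pi)$ for a strongly basic arc $a=(u,v)$ is equivalent to linearizability of all two-path systems of the special form $(u,P,Q,N_u,a\cdot N_v)$, and (b) observes that the proof of \cref{thm_linearization_characterization} invokes linearizability only for two-path systems of precisely this form, so that hypothesis $(\pi)$ suffices to rerun it and moreover the cost function built there coincides with the stated map. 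You cut out the two-path-system middleman and feed $(\pi)$ into the induction directly; since the two hypotheses are equivalent by \cref{obs:linearizability-two-paths}, the arguments are mathematically the same, yours being self-contained where the paper's delegates. Your ``only if'' direction also differs slightly: the paper argues by contrapositive (a failure of $(\pi)$ yields a non-linearizable two-path system, contradicting linearizability of the instance), whereas you compute $\val(a,P)=c(P\cdot a\cdot N_v)-c(P\cdot N_u)=c(a)$ for the reduced-form linearizer supplied by \cref{lemma:nonbasic-arcs}; this is a nice shortcut because it simultaneously proves $(\pi)$ and identifies the reduced-form linearizing function with the map in the statement, i.e., the ``in this case'' clause. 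What the paper's route buys is brevity and reuse of \cref{thm_linearization_characterization}; what yours buys is a proof independent of that theorem and of \cref{obs:linearizability-two-paths}, at the cost of repeating the induction.
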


\begin{proof}
    Let $a = (u, v)$ be a strongly basic arc. We claim that $a$ has the property $(\pi)$ iff for any two  $s$-$u$-paths $P$, $Q$  the two-path system $(u,P,Q,N_u,a \cdot N_v)$ is linearizable with respect to $f$. Indeed, note that by \cref{obs:linearizability-two-paths}, the   two-path system above is linearizable with respect to $f$  iff $f(P \cdot a \cdot N_v) + f(Q \cdot N_u) = f(P \cdot N_u) + f(Q \cdot a \cdot N_v)$. The latter equation is equivalent to $\val(a,Q) = \val(a,P)$. Recalling that  the latter equality  
     holds for every pair of $P, Q$ iff $a$ has the property $(\pi)$ completes the proof of the claim.
    
    Now, assume that some strongly basic arc $(u,v)$ does not have the property $(\pi)$. Then, the corresponding two-path system $(u,P,Q,N_u,a \cdot N_v)$ is not linearizable with respect to $f$ and therefore,  $(G,s,t,f)$ is  not linearizable.
    
    Finally, assume that every strongly basic arc has the property $(\pi)$. 
  In the proof of \cref{thm_linearization_characterization}
    we  use the linearizability assumption   only for  specific two-path systems  of the  form $(u, P, Q, N_u, a \cdot N_v)$,  where   $a = (u, v)$ is some strongly basic arc. Thus,  if the property $(\pi)$ holds for all strongly basic arcs, then  each such specific two-path system is linearizable with respect to $f$ and the linearizability of $(G,s,t,f)$ follows. Furthermore, the value $c(a)$ of the linearizing cost function in \cref{lincost:equ}   equals  $\val(a)$ for  any arc $a$ which is either  strongly basic  or incident to $s$, while  $c(a)=0$ for any   nonbasic arc $a$. 
    \qed
\end{proof}

\section{A linear time algorithm for the \textsc{Lin}SPP$_d$}\label{algo:sec}
In this section, we describe an algorithm which solves the linearization problem for SPP$_d$ (\textsc{Lin}SPP$_d$) in  $\bigO(m^d)$ time, i.e., in linear time.
The algorithm uses the relationship between the \textsc{Lin}SPP$_d$ and the  \emph{All-Paths-Equal-Cost Problem (APECP)} which we introduce in  Section~\ref{APEC:ssec}.  
In \cref{subsection:alg} we describe the SPP$_d$ algorithm and discuss
its running time.
 \subsection{The All Paths Equal Cost Problem of Order-$d$ (APECP$_d$)}
\label{APEC:ssec}
The All Paths  Equal Cost Problem of Order-$d$ (APECP$_d$) is defined as follows.
\begin{center}
\boxxx{\textbf{Problem:}  ALL PATHS EQUAL COST of Order-$d$ (APECP$_d$)
\\[1.0ex]
\textbf{Instance:} An acyclic $\Pst$-covered directed graph $G=(V,A)$ with a source vertex $s$ and a sink vertex $t$, an  integer $d \geq 1$;
an order-$d$   cost function $q_d\colon  \set{B \subseteq A : |B| \leq d} \to \R$.
\\[1.0ex]
\textbf{Question:} Do all $s$-$t$-paths  have the same cost, i.e.\ is there some $\beta \in \R$ such that $\text{SPP}_d(P,q_d) = \beta$ for every path $P$ in $\Pst$?}
\end{center}

In the following we establish a connection between the \textsc{Lin}SPP$_d$ and the APECP$_{d-1}$ for $d\ge 2$. More precisely, we show in Lemma~\ref{lemma:corresponding-instance}  that    an   instance $(G,s,t,q_d)$ of the  \textsc{Lin}SPP$_d$ with an acyclic  $\Pst$-covered digraph $G=(V,A)$ can  be reduced  to $\bigO(m)$ instances of APECP$_{d-1}$, each of them corresponding to exactly one strongly basic arc with respect to some fixed system of nonbasic arcs (see Definitions~\ref{nonbasic:def} and \ref{stronglybasic:def}).  The  APECP$_{d-1}$ instance corresponding to a strongly basic arc  $a = (u, v)$ is defined as follows.
\begin{definition}
\label{def:corresponding-instance}
  The instance $I^{(a)}$ of the APECP$_{d-1}$ corresponding to the strongly basic arc $a=(u,v)$ takes as input   the digraph $G^{(a)} = (V_u, E_u)$ with  source vertex $s' = s$,  sink vertex $t' = u$, where 
 $V_u$ is the set of vertices in $V$ lying on at least one $s$-$u$-path and $A_u$  is the set of arcs in $A$ lying on at least one $s$-$u$-path.
  The  order-$(d-1)$  cost function   $q_{d-1}^{(a)}\colon \{ B \subseteq A_u\colon  |B| \leq d-1\} \to \rz$ is given by 
    \begin{equation} \label{eq:definition_q_e}
        q_{d-1}^{(a)}(B) := \left(\sum_{\substack{C \subseteq N_u\\ |C| \leq d - |B|}}q_d(B \cup C) \right) - \left( \sum_{\substack{C \subseteq a \cdot N_v\\ |C| \leq d - |B|}}q_d(B \cup C) \right). 
    \end{equation}
\end{definition}

\begin{lemma}
\label{lemma:corresponding-instance}
Let $d \geq 2$ and let  $(G, s,t, q_d)$ be an instance of the \textsc{Lin}SPP$_d$ with a fixed system of nonbasic arcs  $N$. 
The APECP$_{d-1}$ instance $I^{(a)}$ corresponding to some strongly basic arc $a$ is a \textsc{YES}-instance iff the arc $a$ has the property $(\pi)$ with respect to $f\colon \Pst\to\rz$ given by $f(P)=SPP_d(P,q_d)$ for $P\in \Pst$. In this case, $\val(a) = \beta$, where $\beta$ is the common cost of all paths in the APECP$_{d-1}$ instance.
\end{lemma}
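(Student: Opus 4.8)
The plan is to reduce the APECP$_{d-1}$ question for $I^{(a)}$ to the constancy of $\val(a,\cdot)$ by means of a single combinatorial identity. First I would record that, by the very definition of $G^{(a)}=(V_u,A_u)$, the $s$-$u$-paths of $G^{(a)}$ are exactly the $s$-$u$-paths of $G$ (an arc lies in $A_u$ iff it lies on some such path), that $G^{(a)}$ is acyclic and $\Pst$-covered with source $s$ and sink $u$, and that every $B\subseteq P$ with $|B|\le d-1$ consists of arcs of $A_u$, so $q_{d-1}^{(a)}(B)$ is well defined on the subsets of any $s$-$u$-path $P$. Consequently $I^{(a)}$ is a \textsc{YES}-instance exactly when the number $\text{SPP}_{d-1}(P,q_{d-1}^{(a)})$ is the same for all $s$-$u$-paths $P$. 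The crux is therefore the identity
\[
\text{SPP}_{d-1}\bigl(P,q_{d-1}^{(a)}\bigr)\;=\;\val(a,P)\qquad\text{for every $s$-$u$-path }P .
\]

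To prove this identity I would exploit that $G$ is acyclic, so that the arc sets of $P$, of $N_u$, and of $a\cdot N_v$ are pairwise disjoint. Hence every arc set $S\subseteq P\cdot N_u$ decomposes uniquely as $S=B\cup C$ with $B\subseteq P$, $C\subseteq N_u$ and $|S|=|B|+|C|$, and likewise for $P\cdot a\cdot N_v$. Substituting this into $f(P')=\sum_{S\subseteq P',\,|S|\le d}q_d(S)$ and collecting terms by their part $B\subseteq P$ gives
\[
f(P\cdot N_u)=\sum_{B\subseteq P}\ \sum_{\substack{C\subseteq N_u\\ |C|\le d-|B|}}\!\!q_d(B\cup C),
\qquad
f(P\cdot a\cdot N_v)=\sum_{B\subseteq P}\ \sum_{\substack{C\subseteq a\cdot N_v\\ |C|\le d-|B|}}\!\!q_d(B\cup C).
\]
I would then sum $q_{d-1}^{(a)}(B)$ over all $B\subseteq P$ with $|B|\le d-1$. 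By \cref{def:corresponding-instance} each summand is precisely the $B$-indexed difference appearing in the two displays, so the contributions with $|B|\le d-1$ assemble into the difference of $f(P\cdot a\cdot N_v)$ and $f(P\cdot N_u)$ restricted to such $B$. The only terms left out are those with $|B|=d$; there the constraint $|C|\le d-|B|=0$ forces $C=\emptyset$ in both displays, so each such $B$ contributes $q_d(B)-q_d(B)=0$ and cancels. Collecting everything yields $\text{SPP}_{d-1}(P,q_{d-1}^{(a)})=\val(a,P)$, the difference of the two $f$-values being exactly $\val(a,P)$ in the sense of \cref{stronglybasic:def}.

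With the identity in hand the lemma follows at once. The instance $I^{(a)}$ is a \textsc{YES}-instance iff $\text{SPP}_{d-1}(P,q_{d-1}^{(a)})=\val(a,P)$ takes the same value for all $s$-$u$-paths $P$, which is exactly the statement that $a$ has the property $(\pi)$; and in that case the common cost $\beta$ of the APECP$_{d-1}$ instance equals $\val(a,P)=\val(a)$. I expect the only genuine difficulty to lie in the bookkeeping of the size bounds and signs: one must see that $q_{d-1}^{(a)}$ is engineered exactly so that, upon summing over $B$, the per-$B$ differences telescope into $f(P\cdot a\cdot N_v)-f(P\cdot N_u)$, and one must check that the ``boundary'' subsets of size $d$ — which fall outside the domain of the order-$(d-1)$ function — contribute nothing precisely because both inner sums there collapse to the single term $q_d(B)$.
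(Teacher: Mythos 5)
Your proposal is correct and takes essentially the same approach as the paper: the paper's proof sketch consists exactly of the chain of equalities you establish, namely that $\text{SPP}_{d-1}(P,q^{(a)}_{d-1})=\sum_{B\subseteq P,\,|B|\le d-1}q^{(a)}_{d-1}(B)$ telescopes into the difference of $f(P\cdot N_u)$ and $f(P\cdot a\cdot N_v)$, and your unique-decomposition argument (using acyclicity to get $P\cap N_u=\emptyset$ and $P\cap(a\cdot N_v)=\emptyset$) together with the cancellation of the $|B|=d$ terms is precisely the verification hidden behind the paper's ``it can be shown that''. Two harmless bookkeeping points: summing \cref{eq:definition_q_e} as literally written gives $f(P\cdot N_u)-f(P\cdot a\cdot N_v)$, i.e.\ $-\val(a,P)$ under \cref{stronglybasic:def}, a sign inconsistency already present inside the paper itself (its proof sketch writes $\val(a,P)=f(P\cdot N_u)-f(P\cdot a\cdot N_v)$, contradicting its own definition of $\val$) and immaterial for the equivalence with property $(\pi)$ since constancy is sign-invariant; and $N_u$ and $a\cdot N_v$ need not be disjoint from each other (the in-tree paths $N_u$ and $N_v$ may merge before reaching $t$), so your ``pairwise disjoint'' claim is slightly too strong, though your argument only ever uses the two disjointness facts involving $P$, which are true.
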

\begin{proof}[Sketch]
Let $a = (u,v) \in A$ be a strongly basic arc and let $P$ be some $s$-$u$-path in $G$. Then $P$ is  contained in the graph $G_a = (V_u, A_u)$. 
It can be shown that
\begin{align*}
    &\quad \val(a, P) = f(P \cdot N_u) - f(P \cdot a \cdot N_v) = \sum_{\substack{B \subseteq P\\ |B| \leq d-1}}q_{d-1}^{(a)}(B) = f^{(a)}(P)\, ,
\end{align*}
where $f^{(a)}(P)=SPP_{d-1}(P,q^{(a)}_{d-1})$ for any $s$-$u$-path $P$ in $G$.
We conclude that the value $\val(a, P)$ is independent of $P$, if and only if for every path the quantity $f^{(a)}(P)$ does not depend on $P$. The latter condition is equivalent to  $I^{(a)}$  being a \textsc{YES}-instance. Furthermore, if this is the case, then $\val(a) = f^{(a)}(P)$ for any $s$-$u$-path $P$. 
\qed
\end{proof}
\cref{lemma:property-pi,lemma:corresponding-instance} imply that an instance $(G,s,t,q_d)$ of the SPP$_d$ with an acyclic digraph $G$ is linearizable  iff each  instance $I^{(a)}$ of the APECP$_{d-1}$ corresponding to some strongly basic arc $a$   (with respect to some fixed system of nonbasic arcs)  is a YES-instance.
Thus, an instance of the  \textsc{Lin}SPP$_d$ can be reduced to $\bigO(m)$ instances  of the APECP$_{d-1}$.  
Next, in Lemma~\ref{lemma:reduction-APEC} we show that each instance of the APECP$_{d-1}$  can  be reduced to an instance of  the \textsc{Lin}SPP$_{d-1}$.
First, we define a specific cost function as follows. 
\begin{definition}
Let $G = (V, A)$ be a $\Pst$-covered acyclic digraph and $\beta \in \R$. The function $\const_\beta : A \rightarrow \R$  assigns cost $\beta$ to every arc incident to the source $s$, and $0$ to all other arcs.
\end{definition}

\begin{lemma}
\label{lemma:reduction-APEC}
    Let $G = (V, A)$ be a $\Pst$-covered acyclic digraph with source vertex $s$ and sink vertex $t$ and let  $N \subseteq A$ a fixed system of nonbasic arcs. Let $q_d$ be an order-$d$  cost function. The instance $(G,s,t,q_d)$ of the APECP$_d$ problem is a YES-instance iff the instance $(G,s,t, q_d)$ of SPP$_d$ is linearizable and $\const_\beta$ is its unique linearizing function in reduced form (with respect to $N$).
\end{lemma}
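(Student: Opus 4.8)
The plan is to reduce the whole statement to two elementary structural observations about $\const_\beta$ together with the uniqueness part of \cref{lemma:nonbasic-arcs}.

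First I would record the key fact that $\text{SPP}(P,\const_\beta) = \beta$ for \emph{every} path $P \in \Pst$. This holds because $G$ is acyclic and $s$ is a source: a simple $s$-$t$-path visits $s$ only as its starting vertex, and $s$ has no incoming arcs, so the only arc of $P$ incident to $s$ is its first arc. Hence exactly one arc of $P$ carries cost $\beta$ under $\const_\beta$ while every remaining arc carries cost $0$, giving $\text{SPP}(P,\const_\beta)=\beta$ independently of $P$.

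Next I would observe that $\const_\beta$ is automatically in reduced form with respect to $N$. By \cref{nonbasic:def}, every nonbasic arc is the nonbasic arc of some vertex $v \in V \setminus \set{s,t}$ and therefore starts at $v \neq s$; since $s$ is a source it can also not be the head of any arc. Thus no nonbasic arc is incident to $s$, and $\const_\beta$, being supported only on arcs incident to $s$, vanishes on all of $N$.

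With these two facts the equivalence follows in both directions. For ``APECP$_d$ YES $\Rightarrow$ $\ldots$'', assume $\text{SPP}_d(P,q_d)=\beta$ for all $P$; combined with the first observation this gives $\text{SPP}_d(P,q_d)=\beta=\text{SPP}(P,\const_\beta)$ for every $P$, so $\const_\beta$ is a linearizing function and the instance is linearizable. Since $\const_\beta$ is moreover in reduced form and \cref{lemma:nonbasic-arcs} guarantees that the reduced-form linearizing function is unique, $\const_\beta$ must be exactly that unique function. For the converse, if $\const_\beta$ is a linearizing function then the first observation yields $\text{SPP}_d(P,q_d)=\text{SPP}(P,\const_\beta)=\beta$ for all $P$, which is precisely the assertion that all $s$-$t$-paths share the common cost $\beta$, i.e.\ the APECP$_d$ instance is a YES-instance. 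The argument contains no real obstacle; the only points requiring care are the bookkeeping of the single parameter $\beta$ (the common path cost on the APECP side must be matched with the coefficient defining $\const_\beta$ on the linearization side) and the verification that nonbasic arcs never touch $s$, which is what places $\const_\beta$ in reduced form and lets us invoke the uniqueness from \cref{lemma:nonbasic-arcs}.
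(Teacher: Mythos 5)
Your proposal is correct and follows essentially the same route as the paper's own proof: both rest on the observations that $\const_\beta$ assigns cost $\beta$ to every path in $\Pst$, that $\const_\beta$ is automatically in reduced form because no nonbasic arc is incident to the source, and that the uniqueness claim then follows from \cref{lemma:nonbasic-arcs}. Your write-up merely spells out more explicitly the details (why each path meets exactly one arc incident to $s$, and why nonbasic arcs avoid $s$) that the paper states in a single line each.
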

\begin{proof}
    Clearly,  $\const_\beta$ is a linearizing function iff   all paths have the same cost $\beta$. 
    Furthermore, observe that all  arcs incident to the source do not belong to $N$. Therefore $\const_\beta$ is in reduced form with respect to $N$. In fact, by \cref{lemma:nonbasic-arcs}  $\const_\beta$ is the unique linearizing  functions in reduced form, and $\red(c') = \const_\beta$ for all other linearizing functions $c'$.
\qed
\end{proof}



\subsection{The linear time \textsc{Lin}SPP$_d$ algorithm}
\label{subsection:alg}
Our \textsc{Lin}SPP$_d$ algorithm ${\cal A}$ works as follows.
Consider an instance $(G,s,t,q_d)$ of the \textsc{Lin}SPP$_d$ with an acyclic $\Pst$-covered digraph $G$, with source vertex $s$, sink vertex $t$ and order-$d$ cost function $q_d$. We first fix some system of nonbasic arcs $N$ and construct the instance $I^{(a)}$ of the  APECP$_{d-1}$ problem given in \cref{def:corresponding-instance} for each strongly basic arc $a$. 
Then,  we  check each instance  $I^{(a)}$ for being a  \textsc{YES}-instance and
do this by reducing  $I^{(a)}$  to  an instance of \textsc{Lin}SPP$_{d-1}$ according to \cref{lemma:reduction-APEC}.
By iterating this process we eventually end up with APECP problems of degree $1$ that can be easily solved by  dynamic programming. The dynamic program is based on the fact that in a $\Pst$-covered acyclic digraph with a cost function $f\colon \Pst \rightarrow \rz$ all $s$-$t$-paths have the same cost iff  for every vertex $v$ all $s$-$v$-paths have the same cost. 
\smallskip

It is not hard to implement the algorithm described above
in $\bigO(d^2m^{d+1})$ time.
With a careful implementation it is possible  to achieve  a better result.
\begin{theorem}
The \textsc{Lin}SPP$_d$ on acyclic digraphs can be solved in $\bigO(d^2m^d)$ time.
\end{theorem}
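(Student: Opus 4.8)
The plan is to prove correctness and the $\bigO(d^2m^d)$ running-time bound separately, both organised around the recursion of algorithm $\mathcal{A}$, which lowers the order by one. For correctness I would argue by induction on $d$, chaining the three structural lemmas. Fix a system of nonbasic arcs $N$. By \cref{lemma:property-pi}, $(G,s,t,q_d)$ is linearizable iff every strongly basic arc has property $(\pi)$; by \cref{lemma:corresponding-instance}, a strongly basic arc $a=(u,v)$ has property $(\pi)$ iff its instance $I^{(a)}$ is a \textsc{YES}-instance of APECP$_{d-1}$, in which case $\val(a)$ equals the common cost $\beta^{(a)}$. Each $I^{(a)}$ is decided through \cref{lemma:reduction-APEC}: it is a \textsc{YES}-instance iff the \textsc{Lin}SPP$_{d-1}$ instance on the very same graph $G^{(a)}$ and cost function $q_{d-1}^{(a)}$ is linearizable and its reduced linearizing function equals $\const_{\beta^{(a)}}$. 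Solving this \textsc{Lin}SPP$_{d-1}$ instance recursively and adding the $\const_{\beta^{(a)}}$-test is exactly what $\mathcal{A}$ does. The recursion bottoms out at APECP$_1$, the all-paths-equal-cost question for a linear cost function, which the dynamic program settles directly using that all $s$-$t$-paths have equal cost iff for every vertex $v$ all $s$-$v$-paths do. When every strongly basic arc passes, the linearizing function is read off as $c(a)=\val(a)$ on basic arcs and $0$ on nonbasic arcs, again by \cref{lemma:property-pi}.

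For the running time I would first count the generated subproblems. The reduction from APECP$_k$ to \textsc{Lin}SPP$_k$ leaves the graph and cost function untouched and hence does not branch; only a \textsc{Lin}SPP$_k$ node branches, spawning one APECP$_{k-1}$ subproblem per strongly basic arc, a factor of $\bigO(m)$, while the order drops by one. Thus there are $\bigO(m^{d-k})$ subproblems of order $k$, each carrying a cost function with $\bigO(m^k)$ coefficients, so the total number of coefficients at order $k$ is $\bigO(m^{d-k}\cdot m^k)=\bigO(m^d)$, and $\bigO(d\,m^d)$ over all orders. This already fixes the storage; it remains to bound the time to produce these coefficients.

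The decisive point is how expensively each coefficient is computed from its parent via \cref{eq:definition_q_e}. Building each subproblem in isolation sweeps, for every strongly basic arc $a=(u,v)$, all subsets $S\subseteq A_u\cup N_u$ of size at most $d$ and adds $q_d(S)$ to $q_{d-1}^{(a)}(S\cap A_u)$, together with a second analogous sweep for $a\cdot N_v$; this costs $\bigO(m^{d})$ per arc and $\bigO(d^2m^{d+1})$ overall — the easy bound. To shave the spurious factor $m$, I would compute all $\bigO(m)$ transformed cost functions of a single \textsc{Lin}SPP$_k$ node in one joint pass rather than independently, exploiting that the nonbasic arcs form an in-tree rooted at $t$, so that the paths $N_u$ (and the paths $a\cdot N_v$) are nested along this tree. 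Propagating the partial subset-sums of the parent incrementally along the in-tree should let each parent coefficient be charged only $\bigO(d)$ times — essentially once per split vertex $u$ compatible with the topological positions of the arcs of the corresponding set — reducing the per-coefficient work from $\bigO(m)$ to amortised $\bigO(d)$. Combined with the $\bigO(d\,m^d)$ coefficient count, this yields the claimed $\bigO(d^2m^d)$ bound. I expect this joint, amortised evaluation of \cref{eq:definition_q_e} across all strongly basic arcs — in particular the charging argument showing each of the $\bigO(m^d)$ parent coefficients is touched only $\bigO(d)$ times — to be the main obstacle; the rest is bookkeeping over the recursion.
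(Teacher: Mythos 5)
Your proposal is correct and takes essentially the same approach as the paper: correctness by chaining \cref{lemma:property-pi,lemma:corresponding-instance,lemma:reduction-APEC} down to an APECP$_1$ dynamic program, and the running time via the $\bigO(m^{d-k})$ subproblem count combined with a joint, incremental evaluation of \cref{eq:definition_q_e} along the in-tree of nonbasic arcs --- your \emph{partial subset-sums} are exactly the paper's quantities $\gamma(B,x)=\sum_{C\subseteq N_x,\,|C|\le d-|B|}q(B\cup C)$, which it tabulates for all sets $B$ with $|B|\le d-1$ and all vertices $x$. The paper likewise defers the detailed implementation and charging argument to its full version, so your sketch matches its proof at the same level of rigor.
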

For the sake of brevity we  refer to the full version of the paper for the proof of the theorem.
Here we just point out the necessity of 
an efficient   computation of  the $I^{(a)}$  instances  for all strongly basic arcs as defined in \cref{def:corresponding-instance}.To this end we   efficiently compute  the values 
\begin{equation*}
    \gamma(B, x) := \sum_{\substack{C \subseteq N_x\\ |C| \leq d - |B|}}q(B \cup C).
\end{equation*}
for all sets $B \subseteq A$ of arcs with $|B| \leq d-1$ and all vertices $x \in V \setminus \set{s}$. These values are then used to efficiently compute the  cost functions $q^{(a)}_{d-1}$ in \cref{eq:definition_q_e}. Further, with  a careful management of the quantities involved  in the computation of the linearizing functions (see \cref{lemma:reduction-APEC}) we obtain a linear time algorithm. 
%
 Note that  the input size required to encode the cost function $q_d$ equals $\sum_{k=0}^d \binom{m}{k} \geq \nicefrac{m^d}{d!}$. Thus,  
$\bigO(d^2m^d)$ is linear in the input size and hence  optimal if  $d$ is considered a constant, like for example in the QSPP. 

\section{The subspace of linearizable instances}
\label{sec:subspace}
Let $d\in \nz$, $d\ge 2$, and a $\Pst$-covered acyclic digraph  $G = (V, A)$ with source vertex $s$ and sink vertex $t$  be  fixed. 
Let $H^{(d)} := \set{B \subseteq H \mid |B| \leq d}$ be the set of all subsets of at most $d$ arcs in arc set $H\subseteq A$.
Every order-$d$ cost function $q_d\colon A^{(d)}\to \rz$ can be uniquely represented by a vector $x \in \R^{A^{(d)}}$ with $q_d(F)=x_F$ for all $F\in A^{(d)}$, and vice-versa. Thus, each instance $(G,s,t,q_d)$ can be identified with the corresponding vector $x\in \rz^{A^{(d)}}$ and we will say that $x\in \R^{A^{(d)}}$ is an instance of the SPP$_d$. 
It is straightforward to see that the  linearizable instances of the SPP$_d$ on the fixed digraph  $G$ form a linear subspace ${\cal L}_d$ of $\R^{A^{(d)}}$. 

Methods to compute this subspace 
are useful in
B\&B algorithms 
for the SPP$_d$ as they can be applied to compute
better lower bounds along the lines of what Hu and Sotirov~\cite{huSo2021} did for general quadratic binary programs.
Hu and Sotirov showed that for $d = 2$ a basis of ${\cal L}_d$ can be computed in polynomial time \cite[Prop. 5]{huSo2021}.
We extend their result 
to the case  $d > 2$.

\begin{theorem}
Let $G = (V, A)$ be  a $\Pst$-covered, acyclic digraph with source vertex $s$ and sink vertex $t$ and let   $d \in \N$ be a constant.  A basis of the subspace ${\cal L}_d$  of the  linearizable instances of the SPP$_d$  can be computed in polynomial time.  \end{theorem}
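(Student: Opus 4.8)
The plan is to exhibit a single homogeneous linear system $Mx = 0$ in the variables $x \in \R^{A^{(d)}}$ (recall $x_F = q_d(F)$) that has only polynomially many rows for fixed $d$, has poly-time computable coefficients, and whose solution set is exactly $\mathcal{L}_d$; a basis of $\mathcal{L}_d = \ker M$ then follows by Gaussian elimination in time polynomial in the $\bigO(m^d)$ ambient dimension. The whole difficulty is to replace the exponentially many two-path conditions of \cref{thm_linearization_characterization} by polynomially many equations. I first record the reason every condition below is linear: for a fixed graph the map $x \mapsto f_q \in \R^{\Pst}$, $f_q(P) = \sum_{S \subseteq P,\, |S| \le d} x_S$, is linear, so each two-path condition \eqref{eq:two-path-lin}, and more generally any path-cost equality we impose, is a homogeneous linear equation in $x$. (This is also exactly why $\mathcal{L}_d$ is a subspace.)

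By \cref{lemma:property-pi} it suffices to require, for each of the $\bigO(m)$ strongly basic arcs $a$, that $a$ has property $(\pi)$, and by \cref{lemma:corresponding-instance} this is equivalent to the corresponding instance $I^{(a)}$ being a \textsc{Yes}-instance of APECP$_{d-1}$. Crucially, the cost function $q^{(a)}_{d-1}$ of $I^{(a)}$ is the image of $x$ under the explicit linear map \eqref{eq:definition_q_e}, so any linear system in $q^{(a)}_{d-1}$ pulls back to a linear system in $x$ with poly-time computable coefficients. It therefore remains to show that, for any order-$k$ cost function $g$ given as an explicit linear image of $x$, the two predicates \textnormal{``}SPP$_k$ is linearizable\textnormal{''} and \textnormal{``}APECP$_k$ is a \textsc{Yes}-instance\textnormal{''} are each expressible as a poly-size homogeneous linear system in $x$. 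I would prove this by induction on $k$. For the linearizability predicate I again invoke \cref{lemma:property-pi,lemma:corresponding-instance} to reduce to $\bigO(m)$ APECP$_{k-1}$ \textsc{Yes}-instance conditions, each handled by the inductive hypothesis. For the APECP$_k$ predicate I use \cref{lemma:reduction-APEC}: it is equivalent to \textnormal{``}SPP$_k$ linearizable\textnormal{''} (handled as above) together with the requirement that the unique reduced linearizer equals $\const_\beta$ for some $\beta$. By the explicit form of the reduced linearizer in \cref{lemma:property-pi}, this requirement amounts to the $\bigO(m)$ linear equations $\val(a) = 0$ for every strongly basic arc $a$, together with the equalities forcing all source-incident basic arcs to share one common $\val$-value; note that each $\val(a)$ is a linear functional of $g$, hence of $x$. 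The base case $k = 1$ is the ordinary all-equal-cost condition for an additive arc cost, captured by the node-potential equations $\phi(s) = 0$ and $\phi(u) + g(\{a\}) = \phi(v)$ for each arc $a = (u,v)$; this is a poly-size linear system in $(x,\phi)$, and projecting out the auxiliary potentials $\phi$ by linear algebra leaves a poly-size linear system purely in $x$.

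Finally I would bound the size of the assembled system: the recursion branches into $\bigO(m)$ subinstances per level and has depth $d-1$, and each node contributes $\bigO(m)$ equations over a graph with $\bigO(m)$ arcs, so the total number of equations and the cost of producing them are polynomial in $m$ and $n$ for constant $d$. Collecting all of them yields the matrix $M$ with $\mathcal{L}_d = \ker M$, and a basis of $\ker M$ is computed routinely.

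The step I expect to be the main obstacle is the bookkeeping that makes this a genuine linear-system construction rather than merely a decision procedure: one must verify that the mutually recursive reductions LinSPP$_k \to$ APECP$_{k-1} \to$ LinSPP$_{k-1}$ of \cref{lemma:corresponding-instance,lemma:reduction-APEC} \emph{compose} into a single linear map from $x$ to the coefficient vector of each generated equation, so that every equation is linear in the original coordinates $x$; and that the side condition \textnormal{``}reduced linearizer $= \const_\beta$\textnormal{''} of \cref{lemma:reduction-APEC} is correctly translated into the linear equations above instead of being silently dropped, since omitting it would enlarge the computed subspace beyond $\mathcal{L}_d$.
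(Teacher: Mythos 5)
Your proposal is correct and takes essentially the same route as the paper's own proof: both unroll the mutually recursive reductions $\textsc{Lin}$SPP$_k \to$ APECP$_{k-1} \to \textsc{Lin}$SPP$_{k-1}$ of \cref{lemma:property-pi,lemma:corresponding-instance,lemma:reduction-APEC}, use the linearity of $x \mapsto q^{(a)}_{d-1}$ and of the reduced linearizing cost function to obtain a polynomial-size homogeneous system $Mx=0$ whose solution set is exactly ${\cal L}_d$ (including the same translation of \enquote{reduced linearizer $=\const_\beta$} into $\val(a)=0$ for strongly basic arcs plus a common value on source-incident arcs), and finish by computing a basis of $\ker(M)$. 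The only immaterial differences are at the bottom of the recursion, where the paper evaluates path costs along one fixed $s$-$w$-path per vertex to get equations directly in $x$, while you introduce auxiliary node potentials and project them out.
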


\begin{proof}[Sketch]
The proof idea is to specify  a  $k\in \nz$ and  a  matrix $M$ of 
polynomially bounded dimensions, 
such that for  $f: \R^{A^{(d)}} \rightarrow \R^k$ with  $f(x)=Mx$,  we have:
$f(x) = 0$ 
iff
$x$ is  a linearizable instance of the SPP$_d$. Thus,  the linearizable instances $x$ of the SPP$_d$ 
form $\ker(M)$ which can be efficiently computed.

The construction of $M$ is done iteratively and exploits the relationship between $SPP_d$ and $APECP_{d-1}$ similarly as in the algorithm 
${\cal A}$ from \cref{subsection:alg}.
In particular we use   the following two facts:
\begin{itemize}
\item[(i)] For each strongly basic arc $a = (u,v)$, the function which maps $x \in \R^{A^{(d)}}$ to $q^{(a)}_{d-1}\colon A_u^{(d-1)} \to \R$ 
 is linear (see \cref{eq:definition_q_e} and recall \cref{def:corresponding-instance} for $A_u$).
 \item[(ii)] The function $c \mapsto \red(c)$ (defined after \cref{lemma:nonbasic-arcs}) is linear.
 \end{itemize}
Using (i) and (ii)  iteratively as in   
algorithm $\cal A$, we  show by induction that for each strongly basic arcs $a = (u,v)$ and each $d \geq 2$ there 
exist $k_a\in\nz$ and a linear function $g_a : \R^{A_u^{(d-1)}} \rightarrow \R^{k_a}$ 
s.t.\ $g_a(x) = 0$
iff if $x$ corresponds to a YES-instance of 
APECP$_{d-1}$.
Then we  construct the  linear function $g'_a$ on the same domain as $g_a$, by setting  $g_a'(x) = \beta$ whenever $g_a(x) = 0$, where $\beta$ is the common path cost of
the corresponding instance $x$ of APECP$_{d-1}$.
 Next we show that for  each vertex $u$ there exists a $k_u \in \N$ and a linear function
$f_u \colon \R^{A_u^{(d-1)}}\to\R^{k_u}$ such that $f_u(x) = 0$ iff $x$ is
a linearizable instance of SPP$_{d-1}$ corresponding to APECP$_{d-1}$ (see  \cref{lemma:reduction-APEC}). Then we construct the linear function $f'_u$ on the same domain as $f_u$ by setting   $f'_u(x)$ equal to the  linearizing cost function   of the instance $x$ of the SPP$_{d-1}$ whenever $x$ is linearizable (i.e. when $f_u(x)=0$).
The construction of $M$ is done be repeating these steps iteratively until $d=1$. 
One can ensure that the size of the matrix representations of all involved functions stays polynomial.
\qed
\end{proof}
\noindent {\bf Acknowledgement.} This research has been supported by the Austrian Science Fund (FWF): W1230.

\bibliographystyle{splncs04}
\bibliography{literature.bib}
\end{document}